\newtheorem{thm}{Theorem}
\DeclareMathOperator*{\tr}{Tr}
\renewcommand{\>}{\rangle}
\begin{document}

\title{Extractable work from ensembles of quantum batteries. Entanglement helps.}

\date{\today}

\author{Robert~Alicki}
\email{fizra@univ.gda.pl}
\affiliation{Institute of Theoretical Physics and Astrophysics, University of Gda\'nsk, Poland}
\author{Mark~Fannes}
\email{mark.fannes@fys.kuleuven.be}
\affiliation{Instituut voor Theoretische Fysica, Universiteit Leuven, B-3001 Leuven, Belgium}

\begin{abstract}
Motivated by the recent interest in thermodynamics of micro- and mesoscopic quantum systems we study the maximal amount of work that can be reversibly extracted from a quantum system used to store temporarily energy. Guided by the notion of passivity of a quantum state we show that entangling unitary controls extract in general more work than independent ones. In the limit of large number of copies one can reach the thermodynamical bound given by the variational principle for free energy.
\end{abstract}

\pacs{05.30-d , 05.70.-a , 03.65.Ud}

\maketitle

\section{Motivation}

The recent interest in models of quantum engines and refrigerators stimulates  theoretical efforts to precisely formulate fundamental thermodynamical principles and bounds valid on the micro- and nano-scale. In principle these can differ from the standard ones and converge to them only in the limit of macroscopic systems. A sample of references, including both general considerations and particular models, is given in~\cite{Qmachines}.

This paper is about the amount of work that can be extracted from a small quantum mechanical system that is used to store temporarily energy to transfer it from a production to a consumption center. To do so we are not coupling such a quantum battery to external thermal baths in order to drive thermodynamical engines but we address it by controlling its dynamics by external time-dependent fields. The battery comes with its initial state $\rho$ and own internal Hamiltonian $H$. The idealized process of reversible energy extraction is then governed by the system dynamics plus some fields that are only turned on during a certain interval $[0,\tau]$ of time. This leads to a time-dependent unitary dynamics of the battery. We now wonder about the maximal amount of work that can be extracted by such a process.

It has been known for a long time that some states can't deliver work in this way. Such states are called passive \cite{Pusz:1977}, \cite{Lenard:1978}. The maximal amount of work extractable from a battery is then the surplus energy of the initial state with respect to the passive state $\sigma_\rho$ with the same eigenvalues as $\rho$.

As we are dealing with small quantum systems we may wonder whether using processes that entangle two identical copies of a given battery can yield a higher energy extraction. More generally, what happens to a large number of copies?

We numerically demonstrate that the efficiency of energy extraction grows with the number of copies. Next we show rigorously that the maximal amount of extractable energy per battery asymptotically equals the energy difference between the initial state $\rho$ of the battery and the energy of the Gibbs state $\omega_{\overline\beta}$ with the same entropy as $\rho$. We indicate how to construct in principle a unitary that achieves this optimal bound.

\section{General context}

The Hilbert space $\mathcal H$ of wave functions of the battery is for simplicity chosen to be $d$-dimensional and we pick as standard basis for $\mathcal H$ the eigenvectors of the system Hamiltonian
\begin{equation}
H = \sum_{j=1}^d \epsilon_j\, |j\>\<j| \enskip\text{with}\enskip \epsilon_{j+1} > \epsilon_{j}.
\end{equation}
We assume here that the energy levels are non-degenerate which holds for a generic Hamiltonian.

The time-dependent fields that will be used to extract energy from the battery are described by $V(t) = V^\dagger(t)$ where $V(t)$ is possibly only different from zero for $0 \le t \le \tau$.

The initial state of the battery is described by a density matrix $\rho$ and the time evolution of $\rho$ is obtained from the Liouville-von~Neumann equation
\begin{equation}
\frac{d\ }{dt}\, \rho(t) = -i [H + V(t), \rho(t)],\enskip \rho(0) = \rho.
\end{equation}
The work extracted by this procedure is then
\begin{equation}
W = \tr \bigl( \rho H \bigr) - \tr \bigl( \rho(\tau) H \bigr)
\end{equation}
where the state at time $\tau$ is related to the initial state $\rho$ by a unitary transformation
\begin{equation}
\rho(\tau) = U(\tau)\, \rho\, U^\dagger(\tau)
\end{equation}
with $U(t)$ the time-ordered exponential of the total Hamiltonian $H + V(t)$:
\begin{equation}
U(\tau) = \text{Texp} \Bigl( -i \int_0^\tau \!ds\, \bigl( H + V(s) \bigr) \Bigr).
\end{equation}
Note that by a proper choice of controlling term $V$ any unitary $U$ can be obtained for $U(\tau)$. Therefore the maximal amount of extractable work (called \emph{ergotropy} in \cite{Allahverdyan:2004}) can be defined as
\begin{equation}
W_{\text{max}} := \tr \bigl( \rho H \bigr) - \min \tr \bigl( U\, \rho\, U^\dagger H \bigr)
\end{equation}
where the minimum is taken over all unitary transformations of $\mathcal H$.

Following Pusz and Woronowicz~\cite{Pusz:1977} and Lenard~\cite{Lenard:1978}, we call a state $\sigma$ passive if no work can be extracted from $\sigma$, i.e.\ if for all unitaries $U$
\begin{equation*}
\tr \bigl( \sigma H \bigr) \le \tr \bigl( U\, \sigma\, U^\dagger H \bigr).
\end{equation*}
The following theorem then holds:

\begin{thm}[\cite{Pusz:1977,Lenard:1978}]
$\sigma$ is passive if and only if
\begin{equation}
\sigma = \sum_{j=1}^d s_j\, |j\>\<j| \enskip\text{with}\enskip s_{j+1} \le s_{j}
\end{equation}
\end{thm}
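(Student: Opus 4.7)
My plan is to handle the two implications separately, using small unitary perturbations for the necessity and a doubly-stochastic matrix argument for the sufficiency.

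For the ``if'' direction, assume $\sigma = \sum_j s_j\,|j\>\<j|$ with $s_{j+1}\le s_j$. For an arbitrary unitary $U$, expand
\begin{equation*}
\tr\bigl(U\sigma U^\dagger H\bigr) = \sum_{j,k} D_{jk}\, s_k\, \epsilon_j, \quad D_{jk}:=|\<j|U|k\>|^2.
\end{equation*}
The matrix $D$ is doubly stochastic, so by the Birkhoff--von~Neumann theorem it is a convex combination of permutation matrices. Minimising the right-hand side over doubly stochastic $D$ therefore reduces to minimising $\sum_j s_{\pi(j)}\epsilon_j$ over permutations $\pi$. Since $(\epsilon_j)$ is strictly increasing and $(s_k)$ is non-increasing, the rearrangement inequality gives that the minimum is attained at $\pi = \mathrm{id}$, yielding $\sum_j s_j\epsilon_j = \tr(\sigma H)$. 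Hence $\tr(\sigma H)\le\tr(U\sigma U^\dagger H)$ for every $U$.

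For the ``only if'' direction, suppose $\sigma$ is passive. I would first extract that $\sigma$ commutes with $H$ by testing one-parameter subgroups $U(t)=e^{-itA}$ for arbitrary Hermitian $A$. The function $t\mapsto \tr(U(t)\sigma U(t)^\dagger H)$ has a minimum at $t=0$, so its derivative must vanish:
\begin{equation*}
0 = -i\,\tr\bigl([A,\sigma]H\bigr) = -i\,\tr\bigl(A[\sigma,H]\bigr)\quad \forall A = A^\dagger.
\end{equation*}
This forces $[\sigma,H]=0$, and the assumed non-degeneracy of $H$ then implies $\sigma$ is diagonal in the eigenbasis of $H$, i.e.\ $\sigma=\sum_j s_j|j\>\<j|$. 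To obtain the monotonicity of the $s_j$, I would apply passivity to the transposition unitary $U_{jk}$ that swaps $|j\>$ and $|k\>$ (with $j<k$), giving
\begin{equation*}
0 \le \tr\bigl(U_{jk}\sigma U_{jk}^\dagger H\bigr) - \tr(\sigma H) = (s_k-s_j)(\epsilon_j-\epsilon_k).
\end{equation*}
Since $\epsilon_j<\epsilon_k$, this yields $s_k\le s_j$ for all $j<k$, and in particular $s_{j+1}\le s_j$.

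The most delicate point is the passage from the first-order stationarity condition to the full commutation $[\sigma,H]=0$ and then to diagonality; this is where the non-degeneracy hypothesis on the spectrum of $H$ is crucial, as otherwise $\sigma$ would only be block-diagonal in the eigenspaces. The sufficiency direction is essentially bookkeeping once Birkhoff's theorem and the rearrangement inequality are invoked.
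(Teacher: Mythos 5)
Your proof is correct. Note, however, that the paper itself offers no proof of this statement: it is quoted as Theorem~1 with a citation to Pusz--Woronowicz and Lenard, so there is no in-paper argument to compare against. What you have written is essentially the standard finite-dimensional proof. The ``only if'' direction is clean: the first-order stationarity of $t\mapsto\tr\bigl(e^{-itA}\sigma e^{itA}H\bigr)$ at $t=0$ gives $\tr\bigl(A[\sigma,H]\bigr)=0$ for all Hermitian $A$, and since $i[\sigma,H]$ is itself Hermitian you may take $A=i[\sigma,H]$ to conclude $[\sigma,H]=0$; non-degeneracy of the spectrum of $H$ (which the paper explicitly assumes) then forces diagonality, and the transposition test gives the ordering. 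The ``if'' direction via Birkhoff--von~Neumann and the rearrangement inequality is also sound: you minimise over the full Birkhoff polytope, which only enlarges the feasible set relative to the unistochastic matrices actually realised by unitaries, and the resulting lower bound $\sum_j s_j\epsilon_j$ is attained at $U=\mathbb{1}$, so the relaxation costs nothing. The one presentational quibble is that your concluding inequality $(s_k-s_j)(\epsilon_j-\epsilon_k)\ge 0$ with $\epsilon_j<\epsilon_k$ gives $s_k\le s_j$ for $j<k$ without needing the prior diagonality argument to have produced any particular ordering, so the two halves of your ``only if'' argument fit together exactly as claimed.
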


In other words, $\sigma$ is passive if and only if it commutes with the system Hamiltonian and its eigenvalues are non-increasing with the energy. Given $\rho$ there is a unique passive state $\sigma_\rho$ minimizing $\tr U\, \rho\, U^\dagger H$. This state is obtained by a unitary rotation of $\rho$ denoted by $U_{\rho}$ and has the form
\begin{equation}
\sigma_\rho = U_{\rho}\,\rho\,U_{\rho}^{\dagger} = \sum_{j=1}^d r_j\, |j\>\<j|
\end{equation}
where $\{r_j\}$ are the eigenvalues of $\rho$ arranged in non-increasing order: $r_{j+1} \le r_{j}$. The corresponding minimal energy is $\sum_{j=1}^d r_j \epsilon_j$ and the maximal amount of extractable work is given by
\begin{equation}
W_{\text{max}} := \tr \bigl( \rho H \bigr) - \tr \bigl( \sigma_\rho\, H \bigr).
\label{wmax}
\end{equation}

\section{A general bound on available work}

We obtain here a bound on $W_{\text{max}}$ by comparing the energies of the passive state $\sigma_\rho$ and of the canonical Gibbs state $\omega_{\overline\beta}$ with the same entropy as $\rho$. Recall that the canonical Gibbs state at inverse temperature $\beta$ is given by
\begin{equation}
\omega_\beta = \frac{\exp(-\beta H)}{\mathcal Z}
\end{equation}
and that its von~Neumann entropy is strictly monotonically decreasing in $\beta$ with range $[0, \log d]$. The von~Neumann entropy $S(\rho)$ of a density matrix $\rho$ is
\begin{equation}
S(\rho) = - \tr \rho \log\rho.
\end{equation}
For any given density matrix $\rho$ on $\mathcal H$ there exists therefore a unique inverse temperature $\overline\beta$  such that $S(\rho) = S(\omega_{\overline\beta})$. The relation between $\rho$ and $\overline\beta$ is, of course, highly non-linear.

We now use the variational principle of statistical mechanics that asserts that the Gibbs canonical density matrix is that which minimizes the free energy:
\begin{equation}
\tr \bigl( \rho H \bigr)- {\overline\beta}^{-1}\, S(\rho) \ge \tr \bigl( \omega_{\overline\beta}\, H \bigr) - {\overline\beta}^{-1}\, S(\omega_{\overline\beta}).
\label{var}
\end{equation}
With our choice of $\overline\beta$ we obtain that
\begin{equation}
\tr \bigl( \rho H \bigr) \ge \tr \bigl( \sigma_{\rho} H \bigr)\ge\tr \bigl( \omega_{\overline\beta}\, H \bigr).
\label{var1}
\end{equation}
and hence the thermodynamical bound on the available work
\begin{equation}
W_{\text{max}} \le \tr \bigl( \rho\, H \bigr) - \tr \bigl(\omega_{\overline\beta}\, H \bigr).
\label{bound}
\end{equation}
Generally, $\omega_{\overline\beta}$ is different from $\sigma_\rho$ as $\omega_{\overline\beta}$ and $\sigma_\rho$ or $\rho$ have different eigenvalues. Note, however, that the two-dimensional case is exceptional because there is a one-to-one correspondence between the entropy of a qubit state and its ordered eigenvalues. Generally it is not true that a product of two independent copies of a passive state still is passive. There is therefore a possibility of improving over~(\ref{wmax}) on the amount of extractable work per copy for several copies of a system. In other words, by using entangling unitaries, one can in principle beat~(\ref{wmax}).

\psfrag{b}{\hspace{-2pt}\raisebox{-3pt}{10}}
\psfrag{c}{\hspace{-2pt}\raisebox{-3pt}{20}}
\psfrag{d}{\hspace{-2pt}\raisebox{-3pt}{30}}
\psfrag{e}{\hspace{-2pt}\raisebox{-3pt}{40}}
\psfrag{n}{$n$}
\psfrag{w}{$e^{(n)}$}
\psfrag{r}{\hspace{-12pt}\raisebox{3pt}{$e^{(\infty)}$}}
\psfrag{g}{\hspace{-9pt}\raisebox{3pt}{$e^{(1)}$}}
\begin{figure}[h]
\begin{center}
\includegraphics[width=.45 \textwidth]{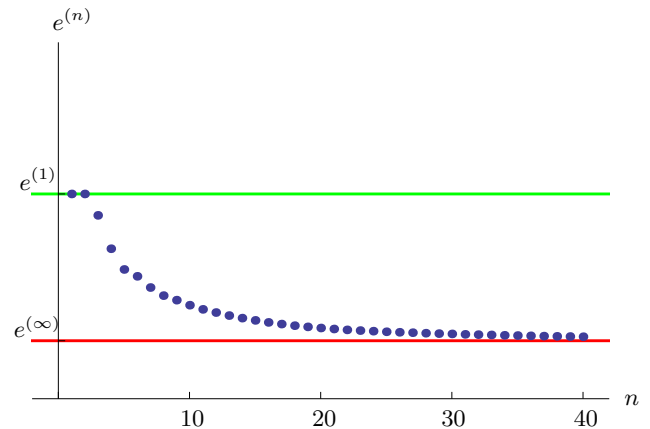}
\end{center}
\caption{Energy per copy of passive state $\sigma_{\otimes^n \rho}$ associated to $\otimes^n \rho$}
\end{figure}

In Fig.1 the energies $e^{(n)}$ per copy of the passive state $\sigma_{\otimes^n \rho}$ obtained from a product state $\otimes^n \rho$ are plotted as dots for $n=1, 2, \ldots, 40$. The lower line shows the asymptotic value of $e^{(n)}$. The system is a three level battery with energy levels $\{0,0.579,1\}$ and the passive state corresponding to the initial density matrix has eigenvalues $\{0.538,0.237,0.224\}$. The values $e^{(n)}$ have been obtained by rearranging the eigenvalues of $\otimes^n \rho$ and the $n$-copy Hamiltonian $H^{(n)}$, see~(\ref{hamn}). The maximal additional work that can be extracted on top of the single copy extractable work using entangling unitaries is the difference between $e^{(1)}$ and $e^{(\infty)}$. We will compute this value in the next section.

\section{Entangling batteries}

A state $\sigma$ is called completely passive if $\otimes^n \sigma$ is passive for all $n = 1, 2, \ldots$  with respect to the sum Hamiltonian
\begin{equation}
H^{(n)} = \sum_{j=1}^n H_j
\label{hamn}
\end{equation}
where $H_j$ is the $j$-th independent copy of $H$. Thermodynamic equilibrium is equivalent to complete passivity:

\begin{thm}[\cite{Pusz:1977,Lenard:1978}]
$\sigma$ is completely passive if and only if it is a Gibbs state.
\end{thm}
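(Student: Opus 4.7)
The plan is to prove the two directions separately. The forward direction is immediate from Theorem~1: if $\sigma=\omega_\beta=e^{-\beta H}/\mathcal Z$ with $\beta\ge 0$, then $\omega_\beta^{\otimes n}=e^{-\beta H^{(n)}}/\mathcal Z^n$ is again a canonical Gibbs state, now for $H^{(n)}$, so its eigenvalues in the product energy basis are non-increasing in the total energy $\sum_i\epsilon_{j_i}$, and Theorem~1 gives passivity for every $n$.

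For the converse, suppose $\sigma$ is completely passive. Passivity of $\sigma$ itself forces, via Theorem~1, that $\sigma$ is diagonal in the energy basis with eigenvalues $s_j=f(\epsilon_j)$, where $f$ is a non-increasing function on the spectrum of $H$. I want to show that $f(\epsilon)=e^{-\beta\epsilon}/\mathcal Z$ for some $\beta\ge 0$. The lever is that $\sigma^{\otimes n}$ is diagonal in the product basis, which is also the eigenbasis of $H^{(n)}$, with eigenvalues $\prod_i f(\epsilon_{j_i})$ attached to energies $\sum_i\epsilon_{j_i}$. Applying Theorem~1 to $\sigma^{\otimes n}$ and setting $g:=-\log f$, passivity translates into the implication
\begin{equation*}
\sum_i \epsilon_{j_i}\le\sum_i \epsilon_{k_i}\ \Longrightarrow\ \sum_i g(\epsilon_{j_i})\le\sum_i g(\epsilon_{k_i}),
\end{equation*}
valid for all $n$ and all multi-indices $(j_i),(k_i)$.

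To conclude, I would show that $g$ is affine on the spectrum. Given three levels $\epsilon_a<\epsilon_b<\epsilon_c$, set $\lambda^\star=(\epsilon_c-\epsilon_b)/(\epsilon_c-\epsilon_a)\in(0,1)$. For any rational $p/n<\lambda^\star$ one has $n\epsilon_b<p\epsilon_a+(n-p)\epsilon_c$, so the implication above, applied to the $n$-copy configurations ``all in level $b$'' versus ``$p$ copies in $a$ and $n-p$ copies in $c$'', yields $g(\epsilon_b)\le (p/n)\,g(\epsilon_a)+(1-p/n)\,g(\epsilon_c)$. The reverse inequality follows from approximants $p/n>\lambda^\star$. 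Letting $p/n\to\lambda^\star$ squeezes these bounds into the affine identity $g(\epsilon_b)=\lambda^\star g(\epsilon_a)+(1-\lambda^\star)g(\epsilon_c)$. Since $a,b,c$ were arbitrary, $g(\epsilon)=\beta\epsilon+c$ on the spectrum; the monotonicity of $f$ forces $\beta\ge 0$, and $\tr\sigma=1$ fixes $e^{-c}=1/\mathcal Z$, so $\sigma=\omega_\beta$.

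The step I expect to be the main obstacle is exactly this passage from the product inequalities to the affine structure of $g$. The two-copy case alone is not enough: its non-trivial content activates only when sums of pairs of energy levels coincide, which does not happen for generic $H$. One really has to exploit arbitrarily many tensor copies together with a rational density argument. Boundary situations---when $\sigma$ has vanishing eigenvalues, so that $g$ attains the value $+\infty$, or when $\sigma$ projects on the ground state, corresponding to $\beta=+\infty$---can be recovered as limits of full-rank Gibbs states.
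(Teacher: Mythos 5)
The paper does not actually prove this theorem: it is quoted from the cited references of Pusz--Woronowicz and Lenard, so there is no in-paper argument to compare yours against. Your proof is essentially the classical one, and its core is sound: the forward direction because a tensor power of a Gibbs state is again Gibbs for $H^{(n)}$, and the converse because the $n$-copy ordering constraints force $g=-\log f$ to be affine in the energy via the rational-density squeeze you describe. Two loose ends are worth fixing. First, Theorem~1 as stated assumes non-degenerate levels, whereas $H^{(n)}$ is always degenerate (permutations of a configuration give equal energies); the correct multi-copy passivity criterion is that a \emph{strict} energy inequality forces the eigenvalue ordering, so your displayed implication should carry $<$ in the hypothesis --- with $\le$ it would wrongly force equal eigenvalues on distinct equal-energy configurations. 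Your application only ever uses strict inequalities, so nothing breaks, but the implication as written is not what passivity gives you. Second, the rank-deficient case is not disposed of by ``taking limits of full-rank Gibbs states'': that argument shows the ground-state projector \emph{is} completely passive, not that it is the \emph{only} completely passive singular state, which is what the converse requires. The missing step is short: if $f$ vanishes above level $m\ge 2$, the configuration $(1,\dots,1,m+1)$ has eigenvalue $0$ and energy $(n-1)\epsilon_1+\epsilon_{m+1}$, so every configuration with strictly larger energy must also have eigenvalue $0$; in particular the all-$m$ configuration forces $n\epsilon_m\le(n-1)\epsilon_1+\epsilon_{m+1}$, which fails for large $n$ since $\epsilon_m>\epsilon_1$. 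Hence $m=1$ and $\sigma$ is the ground-state projector, i.e.\ $\beta=+\infty$. Finally, your side remark that the two-copy constraints are vacuous for generic $H$ is not accurate --- two copies already yield nontrivial inequalities such as $s_2^2\ge s_1s_3$ whenever $2\epsilon_2<\epsilon_1+\epsilon_3$ --- though this does not affect the argument, which correctly relies on arbitrarily many copies.
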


We now consider $n$ independent copies of our battery and apply the general bound~(\ref{bound}) to estimate the maximal amount of available work per battery
\begin{align}
w^n_{\text{max}}
&:= \frac{1}{n}\, \Bigl\{ \tr \bigl( (\otimes^n \rho)\, H^{(n)} \bigr) - \tr \bigl( \sigma_{\otimes^n \rho}\, H^{(n)} \bigr) \Bigr\} \nonumber \\
&\le \tr \bigl( \rho H \bigr) - \tr \bigl( \omega_{\overline\beta}\, H \bigr).
\label{optimal}
\end{align}
It is our aim to show that this bound is actually asymptotically achievable:

\begin{thm}

\begin{equation}
\lim_{n\to\infty} w^n_{\text{max}}= \tr \bigl( \rho H \bigr) - \tr \bigl( \omega_{\overline\beta}\, H \bigr).
\label{ineq}
\end{equation}
\end{thm}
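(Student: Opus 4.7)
\emph{Proof plan.} Since $w_{\max}^n=\tr(\rho H)-\tfrac1n\tr(\sigma_{\otimes^n\rho}H^{(n)})$, the identity~(\ref{ineq}) is equivalent to $\tfrac1n\tr(\sigma_{\otimes^n\rho}H^{(n)})\to\tr(\omega_{\overline\beta}H)$. The lower bound on this limit is already supplied by~(\ref{var1}) applied to $n$ copies (the product Gibbs state $\omega_{\overline\beta}^{\otimes n}$ has entropy $nS(\rho)$ and energy $n\tr(\omega_{\overline\beta}H)$), so only the matching upper bound needs work. Because $\sigma_{\otimes^n\rho}$ achieves the minimum energy on the unitary orbit of $\otimes^n\rho$, it is enough to exhibit some unitary $U_n$ with $\tr\bigl(U_n(\otimes^n\rho)U_n^\dagger H^{(n)}\bigr)\leq n\tr(\omega_{\overline\beta}H)+o(n)$. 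I will take $U_n$ to be a permutation of the product basis that moves the ``typical'' eigenvectors of $\otimes^n\rho$ onto low-lying eigenvectors of $H^{(n)}$.

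Since $\sigma_{\otimes^n\rho}$ depends only on the spectrum of $\rho$, I may assume $\rho=\sum_j r_j|j\>\<j|$ is diagonal in the energy basis, so that $\otimes^n\rho$ and $H^{(n)}$ are simultaneously diagonal with eigenvalues $p(\mathbf i):=r_{i_1}\cdots r_{i_n}$ and $\epsilon(\mathbf i):=\sum_j\epsilon_{i_j}$ on basis vectors $|\mathbf i\>$, $\mathbf i\in\{1,\ldots,d\}^n$. The degenerate case $S(\rho)=\log d$ forces $\overline\beta=0$ and (\ref{ineq}) is trivial, so I assume $\overline\beta>0$ and set $E:=\tr(\omega_{\overline\beta}H)$. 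Fix $\epsilon>0$. The Shannon--McMillan theorem applied to $n$ i.i.d.\ samples from $\{r_j\}$ produces, for large $n$, a typical set $B_n\subset\{1,\ldots,d\}^n$ with $\sum_{\mathbf i\in B_n}p(\mathbf i)\geq 1-\epsilon$ and $|B_n|\leq e^{n(S(\rho)+\epsilon)}$. On the Hamiltonian side, a Sanov/method-of-types estimate shows that for any $\delta>0$ the low-energy set $L_n^\delta:=\{\mathbf i:\epsilon(\mathbf i)\leq n(E+\delta)\}$ already contains all sequences whose empirical distribution lies near the Gibbs state $\omega_{\beta'}$ of mean energy $E+\delta$, yielding $|L_n^\delta|\geq e^{n(S(\omega_{\beta'})-o(1))}$. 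By the thermodynamic identity $dS(\omega_\beta)/dE_\beta=\beta$ with $E_\beta:=\tr(\omega_\beta H)$, one has $S(\omega_{\beta'})=S(\rho)+\overline\beta\,\delta+O(\delta^2)$, so the choice $\delta:=2\epsilon/\overline\beta$ ensures $|L_n^\delta|\geq|B_n|$ for all large $n$. Let $U_n$ be any permutation of the product basis sending $B_n$ injectively into $L_n^\delta$; splitting the trace into typical and atypical contributions gives
\begin{equation*}
\tfrac{1}{n}\tr\bigl(U_n(\otimes^n\rho)U_n^\dagger H^{(n)}\bigr)\leq (E+\delta)+\epsilon\,\epsilon_d,
\end{equation*}
where $\epsilon_d=\max_j\epsilon_j$. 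Letting $n\to\infty$ and then $\epsilon\to 0$ completes the argument.

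The main obstacle is this exponential matching of dimensions: the typical subspace of $\otimes^n\rho$ has cardinality $\sim e^{nS(\rho)}$, and one must guarantee that the low-energy eigenstates of $H^{(n)}$ within a small margin above $nE$ already fill out at least this many basis vectors. This is where the strict positivity of $dS(\omega_\beta)/dE_\beta=\beta$ at $\overline\beta>0$ enters: an extra energy budget $n\delta$ multiplies the low-energy count by $e^{n\overline\beta\delta}$, which absorbs the $e^{n\epsilon}$ slack in the typical-set bound once $\overline\beta\,\delta>\epsilon$. Once this matching is secured, the split into typical ($\geq 1-\epsilon$ of the mass, on low-energy states) and atypical ($\leq\epsilon$ of the mass, trivially bounded by $n\epsilon_d$) parts is routine.
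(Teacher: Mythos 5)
Your proposal is correct and follows the same overall strategy as the paper: sandwich $\frac1n\tr\bigl(\sigma_{\otimes^n\rho}H^{(n)}\bigr)$ between the variational lower bound \eqref{var1} and an upper bound obtained from one explicit unitary that permutes the typical configurations of $\otimes^n\rho$ onto low-energy product basis vectors. The one place where you genuinely diverge is the target of that permutation, and your choice is the more careful one. The paper maps the typical subspace of $\otimes^n\rho$ onto the typical subspace of $\otimes^n\omega_{\overline\beta}$, justified only by the remark that the two have ``approximately the same dimension''; since these dimensions agree only to leading order in the exponent (each is $e^{n(S(\rho)\pm o(1))}$), an injection of one into the other is not automatic, and this is precisely the gap your argument closes. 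By aiming instead at the sublevel set $L_n^\delta=\{\mathbf i:\epsilon(\mathbf i)\le n(E+\delta)\}$ and using the method of types together with $dS(\omega_\beta)/dE_\beta=\beta$ to show that an energy allowance $n\delta$ buys an extra factor $e^{n\overline\beta\delta}$ of basis vectors, you get a guaranteed injection of the typical set for all large $n$, at the price of an error $\delta+\epsilon\,\epsilon_d$ that vanishes as $\epsilon\to0$. The only loose ends are trivial: the pure-state case $S(\rho)=0$ (where $\overline\beta=\infty$ and $\sigma_{\otimes^n\rho}$ is exactly the ground state, so \eqref{ineq} holds with equality for every $n$) should be set aside alongside $S(\rho)=\log d$, and the sequences of type near $\omega_{\beta'}$ should be chosen with energy at most $n(E+\delta)$ rather than merely close to it, e.g.\ by targeting mean energy $E+\delta/2$. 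Neither affects the argument.
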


\begin{proof}
The proof is based on the idea of typical configurations. Assume that $\rho$ is diagonal in the eigenbasis of $H$, this can always be achieved by a suitable unitary rotation and let $\{r_j\}$ be the eigenvalues of $\rho$ arranged in non-decreasing order. A configuration of length $n$ is an $n$-tuple $|\mathbf i \> = |i_1, i_2, \ldots, i_n\>$ of indices in $\{1,2,\ldots, d\}$ with corresponding eigenvalue $r_{i_1} r_{i_2} \cdots r_{i_n}$ of $\otimes^n \rho$. Typical configurations will be of the type $|\mathbf i\>$ where the number of times the index $k \in \{1,2,\ldots,d\}$ occurs lies between $(r_k - \epsilon) n$ and $(r_k + \epsilon) n$ for every $k$. The subspace spanned by these typical $|\mathbf i\>$ has approximately dimension $\exp(n S(\rho))$ and each such configuration corresponds to the average energy
\begin{equation}
\frac{1}{n}\, H^{(n)}\, |\mathbf i\> = \Bigl( \sum_{j=1}^d r_j\, \epsilon_j \Bigr) |\mathbf i> + \text{o}(\epsilon).
\label{typical}
\end{equation}

Now we repeat the same construction for the product of $n$ copies of the Gibbs state $\omega_{\overline\beta}$. As $S(\omega_{\overline\beta}) = S(\rho)$, the typical subspaces of $\otimes^n \rho$ and $\otimes^n \omega_{\overline\beta}$ have approximately the same dimension. Moreover, for  both product states the probability of finding a system outside the typical subspaces is $\text{o}(\epsilon)$. We can now find a unitary $U^{(\epsilon)}$ on $\otimes^n \mathcal{H}$ that maps one subspace into the other. This unitary is highly non-unique, and generally differs from the optimal reordering given by $U_{\otimes^n\rho}$ but nevertheless produces the state with the energy close to the optimal one, i.e.
\begin{equation}
\bigl| \tr \bigl(U^{(\epsilon)} (\otimes^n \rho)\,{U^{(\epsilon)}}^{\dagger}\, H^{(n)} \bigr) -\tr \bigl( (\otimes^n \omega_{\overline\beta})\, H^{(n)} \bigr) \bigr| \le n\, \text{o}(\epsilon) .
\label{optimal1}
\end{equation}
Using \eqref{var1} we obtain
\begin{align}
\tr \bigl( \otimes^n \omega_{\overline\beta}\, H^{(n)} \bigr) &\leq \tr \bigl( \sigma_{\otimes^n \rho}\, H^{(n)} \bigr)\\
\nonumber
&\le \tr \bigl(U^{(\epsilon)} (\otimes^n \rho)\,{U^{(\epsilon)}}^{\dagger}\, H^{(n)} \bigr),
\label{optimal2}
\end{align}
which combined with~\eqref{optimal1} yields the final estimation
\begin{align}
&\tr \bigl( \rho H \bigr) - \tr \bigl( \omega_{\overline\beta}\, H \bigr)\geq w^n_{\text{max}} \geq\\
\nonumber
&\tr \bigl( \rho H \bigr) - \tr \bigl( \omega_{\overline\beta}\, H \bigr)- \text{o}(\epsilon).
\end{align}
\end{proof}

Remark that a unitary transforming $\otimes^n \rho$ into $\sigma_{\otimes^n \rho}$ cannot be product and must therefore dynamically entangle the $n$ batteries. In the numerical example of Fig.~1 the asymptotic value $e^{(\infty)}$ exactly coincides with $\omega_{\overline\beta}$.

\section{Conclusion}

The notion of maximal reversibly extractable work for a quantum battery motivated by the concept of passivity is discussed. It is applicable to  full quantum models of micro- or mesoscopic machines where work is supplied or extracted by a quantum system (`quantum battery', `work reservoir') instead of a time-dependent perturbation of the Hamiltonian. A proper definition of work is important to develop a consistent thermodynamics of small quantum systems which is relevant in nanotechnology and biophysics. Generally, the extractable work is smaller than the thermodynamical bound computed using variational principle for a free energy. Using entanglement one can in general extract more work per battery from several independent copies of a battery and asymptotically  reach the thermodynamical bound. However, the optimal procedures of work extraction are generally difficult to implement by realistic control Hamiltonians. An interesting problem for future investigation is to find efficiency bounds when practical restrictions are imposed on the available control mechanisms.

\begin{acknowledgments}
R.A. acknowledges the support by the Polish Ministry of Science and Higher Education, grant NN 202208238 and M.F. the FWO Vlaanderen project G040710N.
\end{acknowledgments}

\end{document}